\theoremstyle{plain}
\newtheorem{theorem}{Theorem}
\newtheorem{lemma}[theorem]{Lemma}
\newtheorem{corollary}[theorem]{Corollary}
\theoremstyle{definition}
\newtheorem{example}[theorem]{Example}
\newtheorem{remark}[theorem]{Remark}
\numberwithin{equation}{section}
\numberwithin{theorem}{section}
\DeclareMathOperator{\diag}{diag}
\newcommand{\D}{\,\mathrm{d}}
\newcommand{\bs}[1]{{\boldsymbol{#1}}}
\newcommand{\R}{\mathbf{R}}
\newcommand{\N}{\mathbf{N}}
\newcommand{\Ham}[1]{H_{i}}
\begin{document}
%
%
%
%
%
%

\title{{Exact solutions for the selection--mutation equilibrium in the Crow--Kimura evolutionary model}}

\author{Yuri S. Semenov$^{1}$, Artem S. Novozhilov$^{{2},}$\footnote{Corresponding author: artem.novozhilov@ndsu.edu} \\[3mm]
\textit{\normalsize $^\textrm{\emph{1}}$Applied Mathematics--1, Moscow State University of Railway Engineering,}\\[-1mm]\textit{\normalsize Moscow 127994, Russia}\\[2mm]
\textit{\normalsize $^\textrm{\emph{2}}$Department of Mathematics, North Dakota State University, Fargo, ND 58108, USA}}

\date{}

\maketitle

\begin{abstract}
We reformulate the eigenvalue problem for the selection--mutation equilibrium distribution in the case of a haploid asexually reproduced population in the form of an equation for an unknown probability generating function of this distribution. The special form of this equation in the infinite sequence limit allows us to obtain analytically the steady state distributions for a number of particular cases of the fitness landscape. The general approach is illustrated by examples; theoretical findings are compared with numerical calculations.

\paragraph{\small Keywords:} Selection--mutation equilibrium, quasispecies model, Crow--Kimura model, error threshold, single peaked landscape
\paragraph{\small AMS Subject Classification:} Primary:  92D15; 92D25; Secondary: 15A18
\end{abstract}

\section{Introduction}
Selection and mutation are two main evolutionary forces that shape (together with recombination and genetic drift) the life histories of evolving populations. The mathematical theory of selection--mutation models is deep and elaborate (e.g., \cite{burger2000mathematical}), covering various biological assumptions, such as different mutation schemes, consequence of ploidy, mating systems, heterogeneous environment, etc. The scope of the theory notwithstanding, even the simplest possible formulation of a multi locus mutation--selection model in the case of an asexually reproduced haploid population still presents mathematical challenges. The goal of our manuscript is to introduce an approach that allows, at least in some special cases, a relatively straightforward derivation of the steady state distribution for such model.

Consider a haploid one locus asexually reproduced population with $N+1$ alleles. Let $p_i=p_i(t)$ be the frequency of the $i$-th allele at time $t$, $i=0,\ldots,N$; the corresponding Malthusian fitness is denoted $m_i$. Let $\mu_{ij}$ denote the mutation rate of allele $j$ to allele $i$. Then, assuming that the reproduction events and mutations are separated, we end up with a nonlinear system of ordinary differential equations (e.g., \cite{baake1999,crow1970introduction}) of the form
\begin{equation}\label{eq1:1}
    \dot p_i=(m_i-\overline{m})p_i+\sum_{j=0}^N \mu_{ij}p_j,\quad i=0,\ldots, N,
\end{equation}
where $\mu_{ii}=-\sum_{j\neq i}\mu_{ij}$, and $\overline{m}=\sum_{j=0}^Nm_jp_j$ is the mean population fitness. In the matrix form system \eqref{eq1:1} reads
\begin{equation}\label{eq1:2}
    \bs{\dot p}=(\bs M-\overline{m}\bs I)\bs p+\bs{\mathcal M}\bs p,
\end{equation}
where $\bs M=\diag(m_0,\ldots, m_N)$ is the fitness landscape, $\bs{\mathcal M}=(\mu_{ij})$ is the mutation matrix, $\bs p=(p_0,\ldots,p_N)^\top$, and $\bs I$ is the identity matrix. Model \eqref{eq1:2} is invariant with respect to rescalings of the Malthusian fitnesses as $\tilde m_i=m_i+\tilde m$ for any constant $\tilde m$ and does not allow for lethal mutations.

A basic fact about model \eqref{eq1:2} is as follows. Given that the matrix $\bs{\mathcal M}$ is irreducible, then there exists a unique globally stable positive selection--mutation equilibrium $\lim_{t\to\infty}\bs p(t)=\bs{\hat{p}}$ that can be found as the normalized eigenvector of the matrix $\bs M+\bs{\mathcal M}$ corresponding to the strictly dominant eigenvalue $\lambda=\hat{\overline{m}}=\sum_{j=0}^N m_j\hat p_j$ (e.g., \cite{burger2000mathematical,thompson1974eigen}). In Eigen's theory of the origin of life, which is equivalent to the selection--mutation approach in haploid populations \cite{wilke2005quasispecies}, the eigenvector $\bs{\hat p}$, which describes the equilibrium frequencies of the self-replicating macromolecules, was called the quasispecies \cite{eigen1971sma,eigen1988mqs,Eigen1977}.

To obtain further theoretical results on the form of the dominant eigenvalue $\lambda$ and/or selection--mutation equilibrium $\bs{\hat{p}}$, additional assumptions on the form of the fitness landscape $\bs M$ and/or mutation matrix $\bs{\mathcal M}$ are necessary. For example, a profitable way is to neglect the reverse mutations, i.e., put (e.g., \cite{wiehe1997model})
\begin{equation*}
    \begin{split}
       \mu_{ij} & >0,\quad i>j, \\
       \mu_{ij} & =0,\quad i<j,
     \end{split}
\end{equation*}
or even more restrictive (e.g., \cite{wagner1993difference})
\begin{equation*}
    \begin{split}
       \mu_{j+1,j} & >0, \\
       \mu_{ij} & =0,\quad i\neq j,\,i\neq j+1.
     \end{split}
\end{equation*}
A less restrictive assumption is to assume that allele $j$ can mutate only to the neighbors $j-1$ and $j+1$, other mutations are prohibited \cite{baake1999,baake2001mutation,Hermisson2002}. In the last case, assuming additionally that
\begin{equation}\label{eq1:3}
    \begin{split}
       \mu_{j-1,j} & =\mu j, \\
       \mu_{j+1,j} & =\mu(N-j),\\
       \mu_{jj}&=-\mu N,
     \end{split}
\end{equation}
for some constant $\mu>0$, it is possible to write down an explicit solution for the equilibrium frequencies $\hat p_i$ in the case of an additive or Fujiyama fitness landscape, defined as $m_j=-Mj$ for some constant $M>0$ \cite{baake2001mutation,higgs1994error,rumschitzki1987spectral} (we reproduce this solution, using our method, below in Example \ref{ex:5}).

We stress that no simple analytical expressions for the components of $\bs{\hat{p}}$ are known for a general fitness landscape $\bs M$ and the mutation scheme \eqref{eq1:3}; even in the case of a single or sharply peaked landscape, defined as $\bs M=\diag(m_0,0,\ldots,0),\,m_0>0$, there exists no analytical solution.

The mutation scheme \eqref{eq1:3} naturally arises if one changes the point of view from a one locus $N+1$ allele population to a biallelic $N$ locus haploid population, which is scrutinized in the quasispecies theory \cite{baake1999,bratus2013linear,eigen1988mqs,jainkrug2007}. To this end, consider a population of sequences, each sequence consists of $N$ sites (loci), and each site can be either in 0 or 1 state (two alleles per each locus). Let $\mu>0$ denote the mutation rate per site per sequence per time unit, such that 0s mutates to 1s and 1s mutates to 0s with the same rate $\mu>0$. Also assume that fitness is determined by the number of 1s in the sequence such that we have $N+1$ different sequence classes with fitnesses $m_0,m_1,\ldots,m_N$ (this is sometimes called permutation invariant or symmetric fitness landscape). Then the dynamics of the frequencies of classes are determined by model \eqref{eq1:2} with \eqref{eq1:3}, which is often called in the literature the paramuse or Crow--Kimura quasispecies model~\cite{baake1999}.

It was shown that model \eqref{eq1:2}, \eqref{eq1:3} is equivalent to a so-called Ising quantum chain (e.g., \cite{baake2001mutation}). This fact allowed obtaining a number of analytical results about the mean population fitness $\lambda=\hat{\overline{m}}$ (and for some other population averages) in the selection--mutation equilibrium when the sequence length approaches infinity ($N\to\infty$) under an appropriate scaling of the model parameters \cite{baake2001mutation}.

A similar infinite sequence point of view was taken in \cite{Hermisson2002} (see also \cite{Baake2007} for   a more recent generalization), where a maximum principle for the mean population fitness was formulated. For our needs the maximum principle can be formulated as follows (we note that a more general case is treated in \cite{Hermisson2002}). Assume that $m_i=Nr_i=Nr(x_i),\,x_i=i/N\in[0,1]$ and define $g(x)=\mu\bigl(1-2\sqrt{x(1-x)}\bigr)$. Then the scaled equilibrium fitness $\hat{\overline{r}}=\hat{\overline{m}}/N$ is given by
\begin{equation}\label{eq1:4}
    \hat{\overline{r}}\approx \hat{\overline{r}}_\infty=\sup_{x\in[0,1]}\bigl(r(x)-g(x)\bigr).
\end{equation}
Using the approximate expression for the mean fitness in \eqref{eq1:4} it is possible to obtain expressions for other averages such as the variance per site of fitness and of distance from the fittest class~\cite{Hermisson2002}. However, no attempt was made in \cite{Hermisson2002} to obtain analytical expressions for $\bs{\hat{p}}_\infty$ (we use index $\infty$ throughout the text to denote the expressions in the infinite sequence limit).

An exact integral representation of $\bs{\hat{p}}_\infty$ for the infinite sequence length for the mutation scheme \eqref{eq1:3} and the single peaked landscape was written in \cite{galluccio1997exact}, however, transparent analytical expressions were obtained only for $\hat{\overline{r}}_\infty$ and $\hat{p}_{\infty,0}$. In \cite{saakian2004solvable} a full solution for $\hat p_{\infty,i}$ was written down by disregarding the reverse mutations from class $j$ to class $j-1$. The same solution was rigorously obtained in \cite{bratus2013linear} together with estimates of the speed of convergence. In \cite{saakian2007new}, using the Hamilton--Jacobi formalism, a general solution for $\bs{\hat{p}}_\infty$ depending on an arbitrary scaled fitness landscape $r(x)$ is suggested in an integral form. This general solution is, however, not straightforward to apply to obtain relatively simple analytical expressions for the selection--mutation equilibrium frequencies $\hat{p}_{\infty,i}$. Therefore, we conclude that there exists no simple general way to find the quasispecies distribution $\bs{\hat{p}}_\infty$ even under the simplifying assumption of the infinite sequence length.

The goal of the present text is to suggest a straightforward way of calculating the selection--mutation equilibrium for the model \eqref{eq1:2}, \eqref{eq1:3} in the infinite sequence length limit that leads to transparent analytical expressions at least for some particular fitness landscapes (for an extensive background for the current work we refer to \cite{bratus2013linear} and \cite{semenov2014}).

\section{A general approach to solve for the selection--mutation equilibrium}\label{sec:2}
Our goal is to find approximations for the dominant eigenvalue $\lambda=\hat{\overline{m}}$ and the corresponding normalized eigenvector $\bs{\hat{p}}$ (such that $\sum_{i=0}^N\hat p_i=1$ and $\hat{p}_i\geq 0$ for any $i$) of the eigenvalue problem
\begin{equation}\label{eq2:1}
    (\bs M+\mu \bs Q)\bs{\hat{p}}=\lambda \bs{\hat{p}},
\end{equation}
where $\bs M=\diag(m_0,\ldots,m_N)$, $\mu>0$, and
$$
\bs Q=\begin{bmatrix}
               -N & 1 & 0 & 0 & \ldots & \ldots & 0 \\
               N & -N & 2 & 0 & \ldots & \ldots & 0 \\
               0 & N-1 & -N & 3 & \ldots & \ldots & 0 \\
               0 & 0 & N-2 & -N & \ldots & \ldots & 0 \\
               \ldots & \ldots & \ldots & \ldots & \ldots & \ldots & \ldots \\
               0 & 0 & \ldots & \ldots& 2 & -N & N \\
               0 & 0 & \ldots & \ldots & 0& 1 & -N \\
             \end{bmatrix}.
$$
We note that at the equilibrium both $\hat{\overline{m}}$ and $\bs{\hat{p}}$ are the functions of the mutation rate $\mu$: $\hat{\overline{m}}=\hat{\overline{m}}(\mu),\,\bs{\hat{p}}=\bs{\hat{p}}(\mu)$. Together with the matrix $\bs Q$ we also consider a linear differential operator
\begin{equation}\label{eq2:2}
    \mathcal Q\colon P(s)\longrightarrow (1-s^2)P'(s)-N(1-s)P(s),
\end{equation}
acting on the $(N+1)$-dimensional vector space of polynomials of degree less or equal $N$. By direct calculations, matrix $\bs Q$ is the matrix of $\mathcal Q$ in the standard basis $\{1,s,\ldots,s^N\}$.  For any $\bs m=(m_0,\ldots,m_N)\in \R^{N+1}$ and $P(s)=\sum_{i=0}^N p_is^i$ we introduce the notation
$$
\bs m\circ P(s)=\sum_{i=0}^N m_ip_is^i.
$$
Then problem \eqref{eq2:1} can be rewritten for the unknown probability generating function $P(s)$ as
\begin{equation}\label{eq2:3}
    \bs m\circ P(s)+\mu\mathcal QP(s)=\hat{\overline{m}}\, P(s),
\end{equation}
where $\hat{\overline{m}}=\bs m\circ P(1)$, or
\begin{equation}\label{eq2:4}
    \bs m\circ P(s)+\mu(1-s^2)P'(s)-\mu N(1-s)P(s)=\hat{\overline{m}}\, P(s),
\end{equation}
with the normalization condition $P(1)=1$. Inasmuch as problem  \eqref{eq2:4} is equivalent to \eqref{eq2:1} then, due to the Perron--Frobenius theorem, there exists a unique solution $P(s)$ satisfying $P(1)=1$. There is little hope to be able to solve equation \eqref{eq2:4} explicitly (one such example, well known in the literature, is given below, see Example \ref{ex:5}). It is possible, however, to find approximations of the quantities of interest in the case $N\to\infty$ at least for some particular fitness landscapes under some additional assumptions.

To formulate the general approach, we introduce the notations
$$
\bs r=\frac{\bs m}{N}\,,\quad \hat{\overline{r}}=\frac{\hat{\overline m}}{N}\,.
$$
After dividing by $N$ equation \eqref{eq2:4} takes the form
$$
\bs r\circ P(s)+\frac{\mu}{N}(1-s^2)P'(s)-\mu(1-s)P(s)=\hat{\overline{r}}\, P(s).
$$

Now we make the following assumptions:
\begin{itemize}
\item[$\mathcal H1$:] There exists the limit
$$
\lim_{N\to\infty} P(s)=P_\infty(s).
$$
The distribution $\hat{p}_{\infty,i}=\hat{p}_i,\,i=0,1,2,\ldots$, corresponding to $P_\infty(s)=\sum_{i=0}^\infty \hat p_is^i$, will be called the limit distribution.
\item[$\mathcal H2$:]
$$
\lim_{N\to\infty} \frac{\mu}{N}(1-s^2)P'(s)=0.
$$
\item[$\mathcal H3$:] For some limit operator $\bs r_\infty$
$$
\lim_{N\to\infty} \bs r\circ P(s)=\bs r_\infty\circ P_\infty(s).
$$
\end{itemize}
\begin{remark} The assumption $\mathcal H2$ is a formal consequence of $\mathcal H1$ and $\mathcal H3$, but we decided to keep it in the list because it gives the main idea of the suggested method.
\end{remark}

If $\mathcal H1$--$\mathcal H3$ hold then problem \eqref{eq2:4} is reduced to a nonlinear functional equation with respect to the unknown probability generating function $P_\infty(s)$,
\begin{equation}\label{eq2:5}
    -\mu(1-s)P_\infty(s)+\bs r_\infty\circ P_\infty(s)=\hat{\overline{r}}_\infty P_\infty(s),
\end{equation}
with the conditions
\begin{equation}\label{eq2:6}
    \bs{r}_\infty\circ P(1)=\hat{\overline{r}}_\infty,\quad P_\infty(1)=1.
\end{equation}

Problem \eqref{eq2:5}--\eqref{eq2:6} can be effectively solved at least for some simple choices of the fitness landscape $\bs M$ (see the next section for representative examples). A careful scrutiny of validity of the obtained solutions requires a deeper analysis of the convergence of the eigenvalue $\hat{\overline{m}}$ and the quasispecies $\bs{\hat{p}}$ when $N\to\infty$. This, for instance, can be done  with the help of parametric solutions to \eqref{eq2:1} introduced in \cite{bratus2013linear} (in Appendix \ref{app:1} we outline the approach used in \cite{bratus2013linear}, in Appendix \ref{app:2} we provide an alternative parametric solution approach, which is illustrated by applying it to Example \ref{ex:7}). Notwithstanding these concerns, a formal solution of \eqref{eq2:5}--\eqref{eq2:6} is of significant value, because, as the examples show, the found solutions closely approximate, even for moderate values of $N$, numerical solutions of \eqref{eq2:1}.

In a sense our approach is a generalization of the so-called random variable technique (e.g., \cite{bailey1990elements}), and assumption $\mathcal H2$ implies that we disregard all the mutations from $j$ to $j-1$ classes, similarly to \cite{wagner1993difference,wiehe1997model}. To see this, consider the mutation scheme of the form
\begin{equation}\label{eq2:7}
    \begin{split}
       \mu_{j-1,j} & =\mu_1 j, \\
       \mu_{j+1,j} & =\mu_2(N-j),\\
       \mu_{jj}&=-\mu_1j-\mu_2(N-j),
     \end{split}
\end{equation}
where $\mu_1\neq \mu_2$. Then, as it can be directly checked, operator $\mathcal Q$ takes the form
$$
\mathcal Q\colon P(s)\longrightarrow \mu_2(s-1)\left(N-s\frac{\D}{\D s}\right)P(s)+\mu_1 (1-s)P'(s).
$$
After dividing by $N$ and formally taking the limit, the only term that is left is
$$
-\mu_2(1-s)P(s),
$$
which agrees with \eqref{eq2:5} and shows that for the limit equation the rate of backward mutations $\mu_1$ is neglected.

To conclude this section, we suggest the following approach: Solve problem \eqref{eq2:5}--\eqref{eq2:6} for each $\mu$. For the found solution $P_\infty(s)$ check $\mathcal H1$--$\mathcal H3$. If the hypotheses do not hold then notice that $P_\infty(s)\equiv 0$ solves \eqref{eq2:5}--\eqref{eq2:6}, with $\hat{\overline{r}}_\infty(\mu)\equiv 0$. This corresponds to the delocalization phenomenon of the quasispecies distribution, or, in terms of the Ising model, the phase transition, which was called the error threshold in the quasispecies theory. A number of examples illustrating this approach are given in the following section.

\section{Examples of the steady state distributions}
Here we present several examples of the selection--mutation equilibrium for known and new fitness landscapes; we also compare the analytical results with the numerical calculations. Two well known cases are Example \ref{ex:1}, treated in \cite{bratus2013linear,galluccio1997exact,saakian2004solvable}, and Example \ref{ex:5}, treated originally in \cite{baake2001mutation,higgs1994error,rumschitzki1987spectral}. We present full solutions in these two cases to demonstrate how our approach works. Example \ref{ex:7} was discussed and partially analyzed in \cite{bratus2013linear}, however, no derivation for the steady state distribution $\bs{\hat{p}}_\infty$ was provided; here we present all the details. Other examples are new and are not treated anywhere else to the best of our knowledge.

\begin{example}[Single peaked landscape]\label{ex:1} We start with a testbed (both numerical \cite{swetina1982self} and analytical \cite{galluccio1997exact}) for the quasispecies model, which was called the single or sharply peaked landscape.

Let
$$
\bs r_\infty=(1,0,\ldots,0,\ldots).
$$
Then
$$
\bs r_\infty \circ P_\infty(s)=\sum_{i=0}^\infty r_{i}\hat p_i t^i=\hat{p}_0=P_\infty(0).
$$
Equation \eqref{eq2:5} takes the form
$$
-\mu(1-s)P_\infty(s)+P_\infty(0)=\hat{\overline{r}}_\infty P_\infty(s).
$$
Plug $s=0$ in the last expression and find
$$
-\mu P_\infty(0)+P_\infty(0)=\hat{\overline{r}}_\infty P_\infty(0).
$$
Assuming that $P_\infty(0)\neq 0$ we find
$$
\hat{\overline{r}}_\infty(\mu)=1-\mu,
$$
and hence, using the condition $P_\infty(1)=1$,
$$
P_\infty(s)=\frac{1-\mu}{1-\mu s}\,,
$$
which is the probability generating function of the geometric distribution with the parameter $\mu$. Therefore the limit distribution is geometric
$$
\hat{p}_{\infty,i}=(1-\mu)\mu^i,\quad i=0,1,\ldots.
$$

We see, in view of $\mathcal H1$---$\mathcal H3$, that the discussion above holds only for $\mu<1$, therefore at $\mu=1$ the structure of the limit distribution abruptly changes and we obtain the solution $P_\infty(s)=0$ for $\mu\geq 1$. The mean population fitness has the form shown in Fig. \ref{fig:1}a. This abrupt change in the quasispecies distribution was called by Eigen et al. the error threshold \cite{biebricher2005error} (see also \cite{bratus2013linear,Hermisson2002} for an extensive discussion of this notion).
\begin{figure}
\centering
\includegraphics[width=0.95\textwidth]{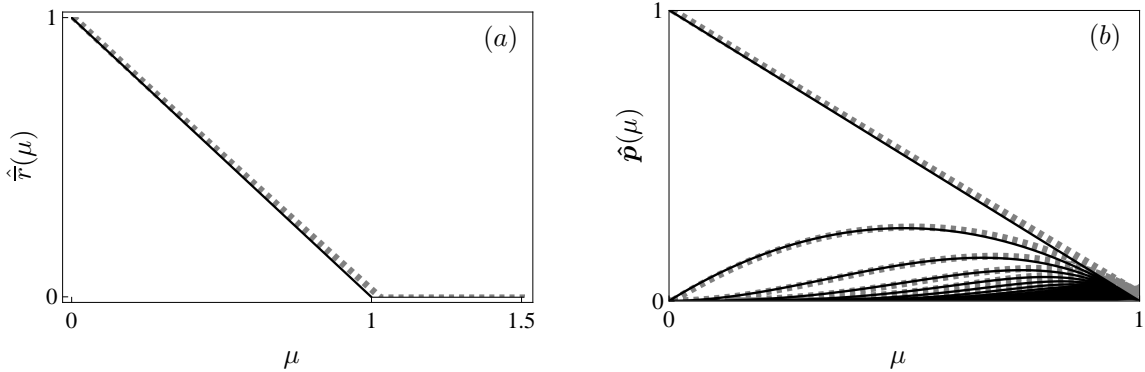}
\caption{Comparison of numerical calculations for the single peaked landscape $\bs r=(1,0,\ldots,0)$ with $N=50$ with the theoretical predictions of Example \ref{ex:1}. The black solid lines are the exact solutions for the case $N\to\infty$ and the grey dashed lines are numerical computations. $(a)$ The mean population fitness versus the mutation rate. $(b)$ The selection--mutation equilibrium $\bs{\hat{p}}$ versus the mutation rate. After $\mu\geq 1$ the quasispecies distribution becomes degenerate (binomial)}\label{fig:1}
\end{figure}
\end{example}

\begin{example}\label{ex:2}As a second example, consider a slight generalization of the single peaked landscape in the form
$$
\bs r_\infty=(2,1,0,\ldots,0,\ldots).
$$
Then $\bs r_\infty \circ P_\infty(s)=2P_\infty (0)+P'_\infty(0)s,$
and \eqref{eq2:5} takes the form
$$
-\mu(1-s)P_\infty(s)+2P_\infty(0)+P_\infty'(0)s=\hat{\overline{r}}_\infty P(s).
$$
Plugging $s=0$ yields
$$
-\mu P_\infty(0)+2P_\infty(0)=\hat{\overline{r}}_\infty P_\infty(0).
$$
Assuming $P_\infty(0)\neq 0$ we find
$$
\hat{\overline{r}}_\infty=2-\mu,
$$
and hence
$$
-\mu(1-s)P_\infty(s)+2 P_\infty(0)+P_\infty'(0)s=(2-\mu)P_\infty(s),
$$
or
\begin{equation}\label{eq3:1}
\mu s P_\infty(s) +2P_\infty(0)+P_\infty'(0)=2P_\infty(s).
\end{equation}
After differentiating the last expression with respect to $s$ and plugging $s=0$ we find $\mu P_\infty(0)=P_\infty'(0).$
Plugging this into \eqref{eq3:1} implies
$$
P_\infty(s)=P_\infty(0)\frac{2+\mu s}{2-\mu s}\,,
$$
and finally, using the condition $P_\infty(1)=1$, we obtain
$$
P_\infty(s)=\frac{(2-\mu)(2+\mu s)}{(2+\mu)(2-\mu s)}=\frac{2-\mu}{2+\mu}+\sum_{j=1}^\infty\frac{(2-\mu)\mu^j}{(2+\mu)2^{j-1}}\,s^j.
$$
Again, the reasonings above work only for $\mu<2$, and for $\mu\geq 2$ we obtain that $\hat{\overline{r}}(\mu)=0$ and the quasispecies distribution is degenerate (see also Fig. \ref{fig:2}).
\begin{figure}
\centering
\includegraphics[width=0.95\textwidth]{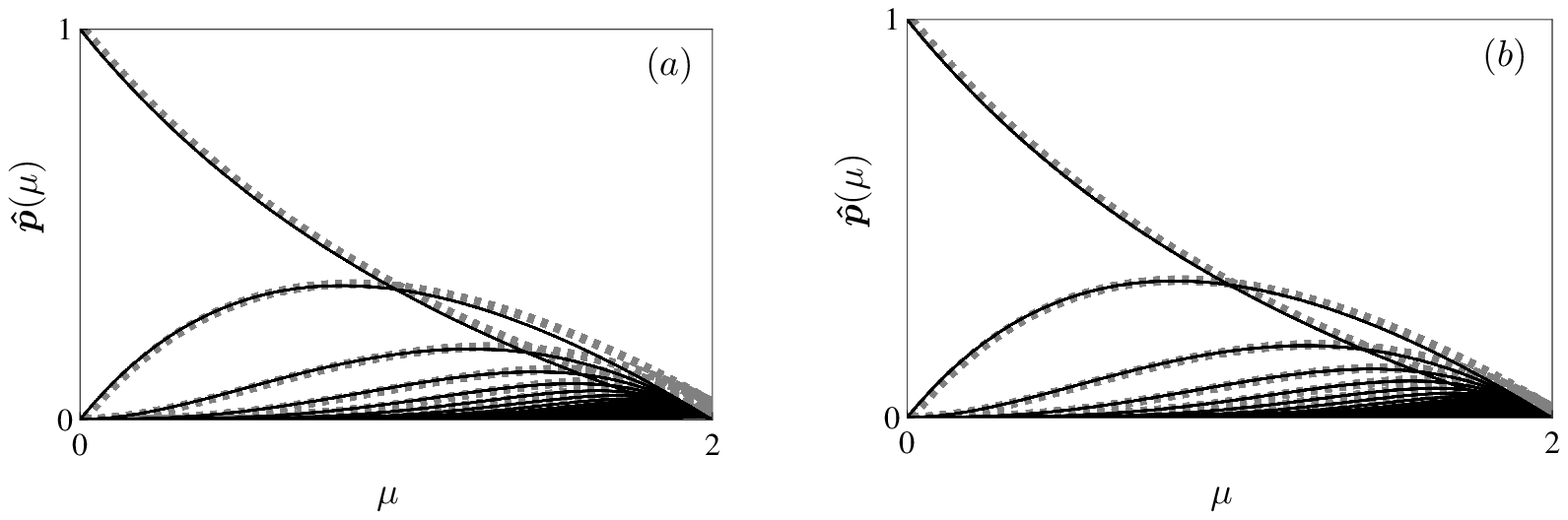}
\caption{Comparison of numerical calculations for the fitness landscape in Example \ref{ex:2} with the theoretical predictions. The black solid lines are the exact solutions for the case $N\to\infty$ and the grey dashed lines are numerical computations. $(a)$ $N=50$; $(b)$ $N=100$}\label{fig:2}
\end{figure}
\end{example}

\begin{example}\label{ex:3}Let
$$
\bs r_\infty=(1,2,0,\ldots,0,\ldots).
$$
Then
$
\bs r_\infty\circ P_\infty(s)=P_\infty(0)+2P_\infty'(0)s,
$ and \eqref{eq2:5} takes the form
$$
-\mu(1-s)P_\infty(s)+P_\infty(0)+2P_\infty'(0)s=\hat{\overline{r}}_\infty P_\infty(s).
$$
We must assume that $P_\infty(0)=\hat p_{\infty,0}=0$ since $\max r_k=2=r_1>r_0$, therefore we cannot just plug $s=0$ in the last equality. Instead, we differentiate it and find
$$
-\mu(1-s)P_\infty'(s)+\mu P_\infty(s)+2P_\infty'(0)=\hat{\overline{r}}_\infty P_\infty'(s).
$$
Then, for $s=0$, assuming that $\hat p_{\infty,0}=0$ and $P_\infty'(0)\neq 0$,
$$
\hat{\overline{r}}_\infty=2-\mu.
$$
Therefore,
$$
-\mu(1-s)P_\infty(s)+2P_\infty'(0)s=(2-\mu)P_\infty(s),
$$
or
$$
P_\infty(s)=\frac{2P_\infty'(0)s}{2-\mu s}\,,
$$
which, together with $P_\infty(1)=1$, gives
$$
P_\infty(s)=\frac{(2-\mu)s}{2-\mu s}=\sum_{j=1}^\infty\left(1-\frac{\mu}{2}\right)\frac{\mu^{j-1}}{2^{j-1}}\,s^j,
$$
which holds only for $\mu<2$, for $\mu\geq 2$ the distribution becomes degenerate (see Fig. \ref{fig:3}).
\begin{figure}
\centering
\includegraphics[width=0.95\textwidth]{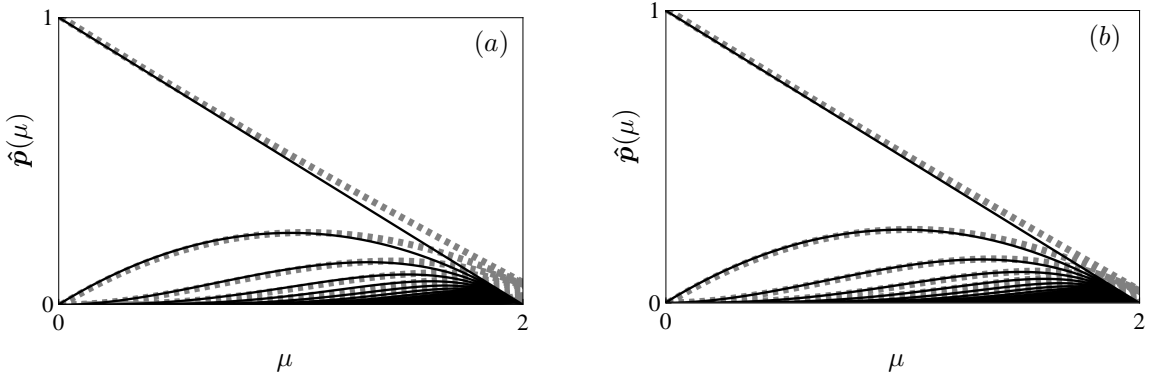}
\caption{Comparison of numerical calculations for the fitness landscape in Example \ref{ex:3} with the theoretical predictions. The black solid lines are the exact solutions for the case $N\to\infty$ and the grey dashed lines are numerical computations. $(a)$ $N=50$; $(b)$ $N=100$}\label{fig:3}
\end{figure}
\end{example}
\begin{remark}\label{rm:3:4}In general, if $\max r_k=r_a$ and $r_j<r_a$ for $j>a$ then we need additional initial conditions
$$
P_\infty(0)=P_\infty'(0)=\ldots=P_\infty^{(a-1)}(0)=0.
$$
These initial conditions are motivated by the comparison of the theoretical predictions with the numerical calculations, and at this point we lack an analytical proof of the validity of these conditions in general.
\end{remark}
\begin{example}[A geometric landscape]\label{ex:4}
Let $0<q<1$ and
$$
\bs r_\infty=(1,q,q^2,\ldots,q^N,\ldots),
$$
then $
\bs r_\infty \circ P_\infty(s)=P_\infty(qs),
$ and \eqref{eq2:5} reads
$$
-\mu(1-s)P_\infty(s)+P_\infty(qs)=\hat{\overline{r}}_\infty P_\infty(s).
$$
Plugging $s=0$ and assuming $P_\infty(0)\neq 0$ we find
$$
\hat{\overline{r}}_\infty=1-\mu,
$$
which implies
\begin{equation}\label{eq3:2}
P_\infty(qs)=(1-\mu s)P_\infty (s).
\end{equation}
Using the fact $P_\infty(1)=1$ and plugging into the last expression $s=1,\,s=q,\,s=q^2,\ldots$ we find
$$
P_\infty(q)=1-\mu,\, P_\infty(q^2)=(1-\mu q)P_\infty(q)=(1-\mu)(1-\mu q),\,\ldots\,,P_\infty(q^n)=\prod_{j=0}^{n-1}(1-\mu q^j),\ldots
$$
Taking the limit $n\to\infty$ implies
$$
P_\infty(0)=\lim_{n\to\infty} P_\infty(q^n)=\prod_{j=0}^\infty (1-\mu q^j).
$$
Assuming $P_\infty(s)=\sum_{j=0}^\infty \hat{p}_j s^j$ in \eqref{eq3:2}, we obtain
$$
q^n\hat p_n=\hat p_n-\mu \hat p_{n-1},\quad \text{or}\quad \hat p_n=\frac{\mu}{1-q^n}\,\hat p_{n-1},
$$
which gives, for $0<\mu<1$, the limit distribution (see also Fig. \ref{fig:4})
$$
\hat p_0=\prod_{j=0}^\infty (1-\mu q^j),\quad \hat p_n=\frac{\mu^n}{\prod_{k=1}^n(1-q^k)}\,\hat p_0.
$$The value $\mu=1$ is critical and corresponds to the error threshold. See Fig. \ref{fig:4} for comparison of the theoretical predictions and numerical computations.

We note that there are effective methods to calculate the expressions of the form $\prod_{j=0}^\infty (1-\mu q^j)$ numerically. For instance, in \textit{Mathematica}$^\copyright$ this is done with the help of function \verb"QPochhammer["$\mu,q$\verb"]".
\begin{figure}
\centering
\includegraphics[width=0.95\textwidth]{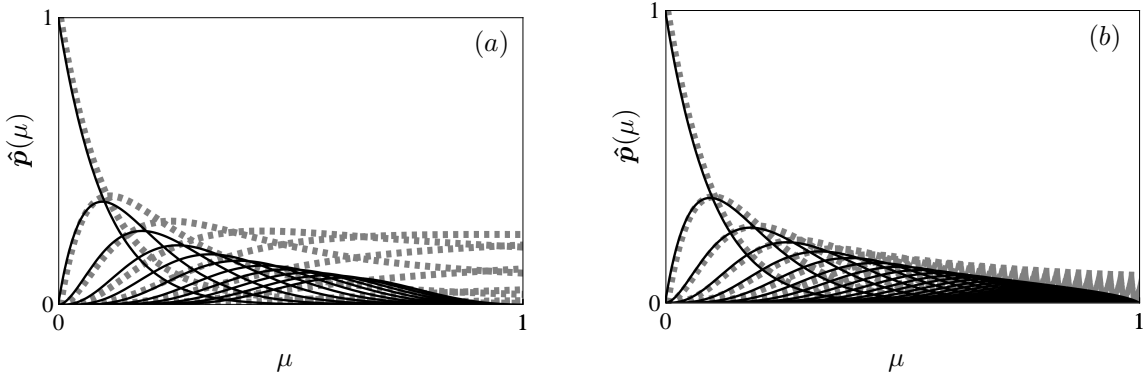}
\caption{Comparison of numerical calculations for the fitness landscape in Example \ref{ex:4} with the theoretical predictions. The black solid lines are the exact solutions for the case $N\to\infty$ and the grey dashed lines are numerical computations. $(a)$ $N=10$; $(b)$ $N=50$}\label{fig:4}
\end{figure}
\end{example}

\begin{example}[Additive or Fujiyama fitness landscape]\label{ex:5} The only fitness landscape for which problem \eqref{eq1:2} with the mutation scheme \eqref{eq1:3} can be analytically solved for the selection--mutation equilibrium is the additive fitness landscape, which we define here as
$$
\bs r=\bs r_N=\left(1,1-\frac 1N,1-\frac 2N,\ldots,1-\frac NN=0\right).
$$
The solution can be found in \cite{baake2001mutation,higgs1994error} and arguably is most naturally derived using the tensor products for the representation of matrices $\bs M+\bs{\mathcal M}$. Here we re-derive the same solution using the generating function approach.

Since
$$
\bs r_N\circ P_N(s)=P_N(s)-\frac{s}{N}P'_N(s),
$$
then \eqref{eq2:4} reads
$$
\frac{\mu}{N}(1-s^2)P'_N(s)-\mu(1-s)P_N(s)+P_N(s)-\frac{s}{N}P_N'(s)=\hat{\overline{r}}_NP_N(s).
$$
Making the substitution $P_N(s)=W^N(s)$ yields the ODE
$$
\mu(1-s^2)W'(s)-\mu(1-s)W(s)+W(s)-sW'(s)=\hat{\overline{r}}_NW(s),
$$
or
$$
\frac{W'(s)}{W(s)}=\frac{A}{s+a}+\frac{B}{s+b}\,,
$$
where
\begin{equation*}
    \begin{split}
      a &=\frac{\sqrt{1+4\mu^2}+1}{2\mu}\,,\quad b=\frac{\sqrt{1+4\mu^2}-1}{2\mu}\,,\quad ab=1,\quad a>0,\quad 0<b<1,  \\
      A & =\frac 12+\frac{2\hat{\overline{r}}_N+2\mu-1}{2\sqrt{1+4\mu^2}}\,,\quad B  =\frac 12-\frac{2\hat{\overline{r}}_N+2\mu-1}{2\sqrt{1+4\mu^2}}\,,\quad A+B=1.
    \end{split}
\end{equation*}
Integrating this ODE yields
$$
W(s)=C(s+a)^A(s-b)^B,
$$
with the condition $C(1+a)^A(1-b)^B=1$. We formally have that $W(b)=0$, but this cannot occur, since $0<b<1$ and the polynomial $P_N(s)$ has all non-negative coefficients and is not equal to zero anywhere on the interval $[0,1]$. This implies that $B=0$, and therefore $A=1$, and
$$
W(s)=\frac{s+a}{1+a}\,.
$$
Moreover, the condition $B=0$ implies that independently of $N$
$$
\hat{\overline{r}}_N=\frac{1-2\mu+\sqrt{1+4\mu^2}}{2}\,.
$$
The final solution is
$$
P_N(s)=\left(\frac{s+a}{1+a}\right)^N,
$$
and therefore the steady state distribution is binomial:
$$
\hat p_{N,j}=\binom{N}{j}\frac{b^j}{(1+b)^N}\,\quad j=0,\ldots,N,
$$
which holds for any $\mu>0$, there exists no error threshold for this fitness landscape.
\end{example}

\begin{example}\label{ex:6} Consider a close relative of the additive fitness landscape in the form
$$
\bs r_\infty=\left(1,1-\frac 1K,1-\frac 2K,\ldots,1-\frac KK=0,\ldots\right),
$$
where $K\in\N$ and does not depend on $N$.

We have
$$
\bs r_\infty\circ P_\infty(s)=\sum_{a=0}^{K-1}\left(1-\frac{a}{K}\right)\hat p_a s^a,
$$
and \eqref{eq2:5} reads
$$
-\mu(1-s)P_\infty(s)+\sum_{a=0}^{K-1}\left(1-\frac{a}{K}\right)\hat p_a s^a=\hat{\overline{r}}_\infty P_\infty(s).
$$
Plugging $s=0$ and assuming $P_\infty(0)\neq 0$ implies
$$
\hat{\overline{r}}_\infty =1-\mu.
$$
Therefore,
\begin{equation}\label{eq3:3}
P_\infty(s)=\frac{\sum_{a=0}^{K-1}\left(1-\frac{a}{K}\right)\hat p_a s^a}{1-\mu s}\,,
\end{equation}
and we need to determine $\hat p_a$ for $a=0,\ldots,K-1$. Since we have
$$
\sum_{a=0}^{K-1}\left(1-\frac{a}{K}\right)\hat p_a s^a=(1-\mu s)\sum_{a=0}^\infty \hat p_as^a,
$$
then for $1\leq a\leq K-1$
$$
\left(1-\frac aK\right)\hat p_a=\hat p_a-\mu\hat p_{a-1},
$$
or
$$
\hat p_a=\frac{\mu K}{a}\hat p_{a-1}.
$$
The last recurrent formula implies that for $1\leq a\leq K-1$
$$
\hat p_a=\frac{(\mu K)^a}{a!}\hat p_0.
$$
To determine $\hat p_0$, we use $P_\infty(1)=1$, which yields, for $\mu<1$,
$$
\sum_{a=0}^{K-1}\left(1-\frac{a}{K}\right)\hat p_a=1-\mu,
$$
or, using the expressions for $\hat p_a$:
$$
1=\hat p_0\left(\sum_{a=0}^{K-2}\frac{(\mu K)^a}{a!}+\frac{(\mu K)^{K-1}}{(1-\mu)(K-1)!}\right),
$$
which allows us to find $\hat p_0$. Now we determined all $\hat p_a$ for $0\leq a\leq K-1$ and from \eqref{eq3:3} we have that for $j\geq K$
$$
\hat p_j=\sum_{a=0}^{K-1}\left(1-\frac{a}{K}\right)\hat p_a \mu^{j-a}.
$$
To simplify the expressions for $\hat p_j$ we note that due to the central limit theorem (CLT), for $K\to\infty$,
$$
\hat p_0\sim e^{-\mu K}.
$$
Indeed, if $\xi_1,\ldots \xi_K$ are independent identically distributed Poisson random variables with the mean $\textsf{E}{\xi}_i=\mu$, then $X_K=\xi_1+\ldots\xi_K$ has the Poisson distribution with the mean and the variance $\textsf{E}X_K=\textsf{Var}\,X_K=\mu K$. Therefore, using the CLT,
\begin{align*}
\sum_{a=0}^{K-2}\frac{(\mu K)^a}{a!}e^{-\mu K}&+\frac{(\mu K)^{K-1}}{(1-\mu)(K-1)!}e^{-\mu K}=\textsf{P}(X_K\leq K-2)+\frac{\textsf{P}(X_K=K-1)}{1-\mu}\\
&=\textsf{P}\left(\frac{X_K-\mu K}{\sqrt{\mu K}}\leq \frac{(1-\mu)K-2}{\sqrt{\mu K}}\right)+\frac{\textsf{P}(X_K=K-1)}{1-\mu}\to \textsf{P}(X<\infty)+0=1
\end{align*}
as $K\to\infty$. Here $X$ is the standard normally distributed random variable.

Therefore, for the case $1\ll K\ll N$ the equilibrium distribution is approximately Poisson with parameter $\mu K$ (see Fig. \ref{fig:5}).
\begin{figure}
\centering
\includegraphics[width=0.95\textwidth]{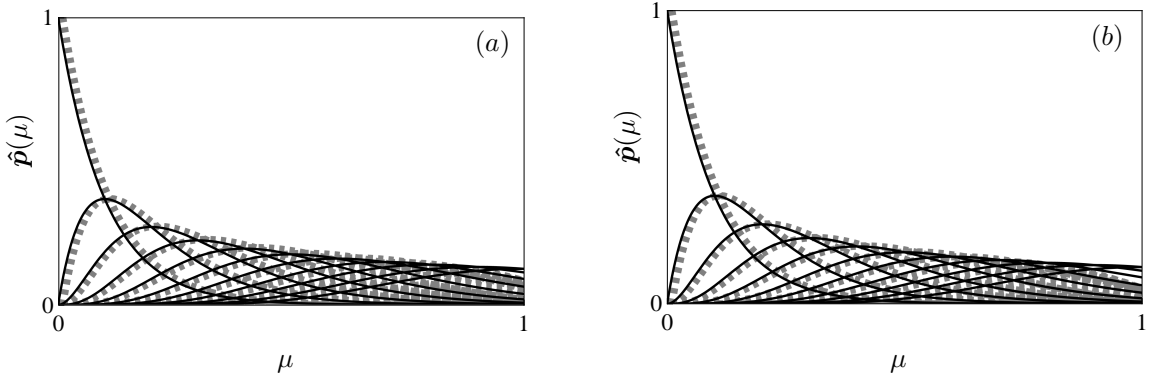}
\caption{Comparison of numerical calculations for the fitness landscape in Example \ref{ex:6} with the theoretical predictions. The black solid lines are the exact solutions for the case $N\to\infty$ and the grey dashed lines are numerical computations. $(a)$ $N=100,\,K=10$; $(b)$ $N=200,\,K=10$}\label{fig:5}
\end{figure}
\end{example}

In all the examples above the limit mean fitness $\hat{\overline{r}}_\infty$ can be determined from the maximum principle \eqref{eq1:4}. The next example shows that in the case when $r(x)$ has a point of discontinuity such that at this point function $r(x)$ neither left nor right continuous then a formal application of the maximum principle \eqref{eq1:4} may lead to incorrect conclusions.
\begin{example}\label{ex:7}
Let $N=2A$ be an even number, and
$$
\bs r_N=(0,\ldots,0,1,0,\ldots,0),
$$
where 1 is exactly at the $A$-th position. In \cite{bratus2013linear}, using a parametric solution to the eigenvalue problem (this solution is outlined in Appendix \ref{app:1}), it was proved that
$$
\hat{\overline{r}}_\infty=\sqrt{\mu^2+1}-\mu,
$$
which is defined for any $\mu>0$, there is no error threshold for this fitness ladscape. In Appendix \ref{app:2} we re-derive this result using a new approach.

The maximum principle \eqref{eq1:4} for this example cannot be applied because $r(x)$ is neither left nor right continuous at the point $x=0.5$. Formal application of the maximum principle leads to incorrect conclusion (e.g., for $\mu=1$ it predicts that $\hat{\overline{r}}\approx 1$, which is wrong, the exact value is $\sqrt{2}-1$). In \cite{bratus2013linear} also the expressions for the limit distribution $\bs{\hat{p}}_\infty$ were given without a full derivation. Here we show, using the method of generating functions, that the coordinates of the selection--mutation equilibrium indeed can be found in an explicit form.

In the following it will be convenient to consider the generating functions in the form of the Laurent series
$$
U(s)=\sum_{n=-\infty}^{\infty}u_ns^n.
$$
In \eqref{eq2:4} we make the substitution $P_N(s)=P_{2A}(s)=s^AU_A(s)$. Since the fitness landscape is symmetric, then the coefficients of $U_A(s)$ are also symmetric:
$$
U_A(s)=u_{A,0}+\sum_{n=1}^A u_{A,n}(s^n+s^{-n}),\quad \hat p_{A\pm n}=u_{A,n},\quad U_A(1)=1.
$$
After dividing by $2A$ equation \eqref{eq2:4} becomes
$$
\frac{\mu}{2A}(1-s^2)U'_A(s)+\frac{\mu}{2}(s^{-1}+s-2)U_A(s)+u_{A,0}=\hat{\overline{r}}_{2A}U_A(s).
$$
Similarly to $\mathcal H1$---$\mathcal H3$ we assume that, given that $A\to\infty$, the first term in the last equality vanishes, and $U_A(s)$ turns into
$$
U_\infty(s)=u_0+\sum_{n=1}^\infty u_n(s^n+s^{-n}),\quad U_\infty(1)=1=u_0+2\sum_{n=1}^\infty u_n.
$$
Therefore, we have the limit equation
$$
\frac{\mu}{2}(s^{-1}+s-2)U_\infty(s)+u_0=\hat{\overline{r}}_\infty U_\infty(s),\quad U_\infty(1)=1,
$$
or
$$
u_0=\left(\hat{\overline{r}}_\infty+\mu-\frac{\mu}{2}(s^{-1}+s)\right)\left(u_0+\sum_{n=1}^\infty(s^n+s^{-n})\right).
$$
Plugging in $s=1$ we find
$$
u_0=\hat{\overline{r}}_\infty.
$$
Moreover, by equating the coefficients at $s^{-n}+s^n$, we obtain the system
$$
u_0=(\hat{\overline{r}}_\infty+\mu)u_0-\mu u_1,\quad 0=(\hat{\overline{r}}_\infty +\mu)u_n-\frac{\mu}{2}(u_{n-1}+u_{n+1}),\quad n\geq 1.
$$
This system has the following solution, which can be directly checked,
$$
u_n=\hat{\overline{r}}_\infty \left(\frac{1-\hat{\overline{r}}_\infty}{1+\hat{\overline{r}}_\infty}\right)^n,\quad \hat{\overline{r}}_\infty=\sqrt{\mu^2+1}-\mu,\quad n=0,1,\ldots.
$$
Therefore, the limit distribution $\bs{\hat p}_\infty$ is two-sided geometric, and for large $N=2A$ we have approximately
$$
\hat p_{A\pm n}\approx u_n=\hat{\overline{r}}_\infty \left(\frac{1-\hat{\overline{r}}_\infty}{1+\hat{\overline{r}}_\infty}\right)^n.
$$
Numerical computations confirm this conclusion (see Fig. \ref{fig:6}).

\begin{figure}[!th]
\centering
\includegraphics[width=0.95\textwidth]{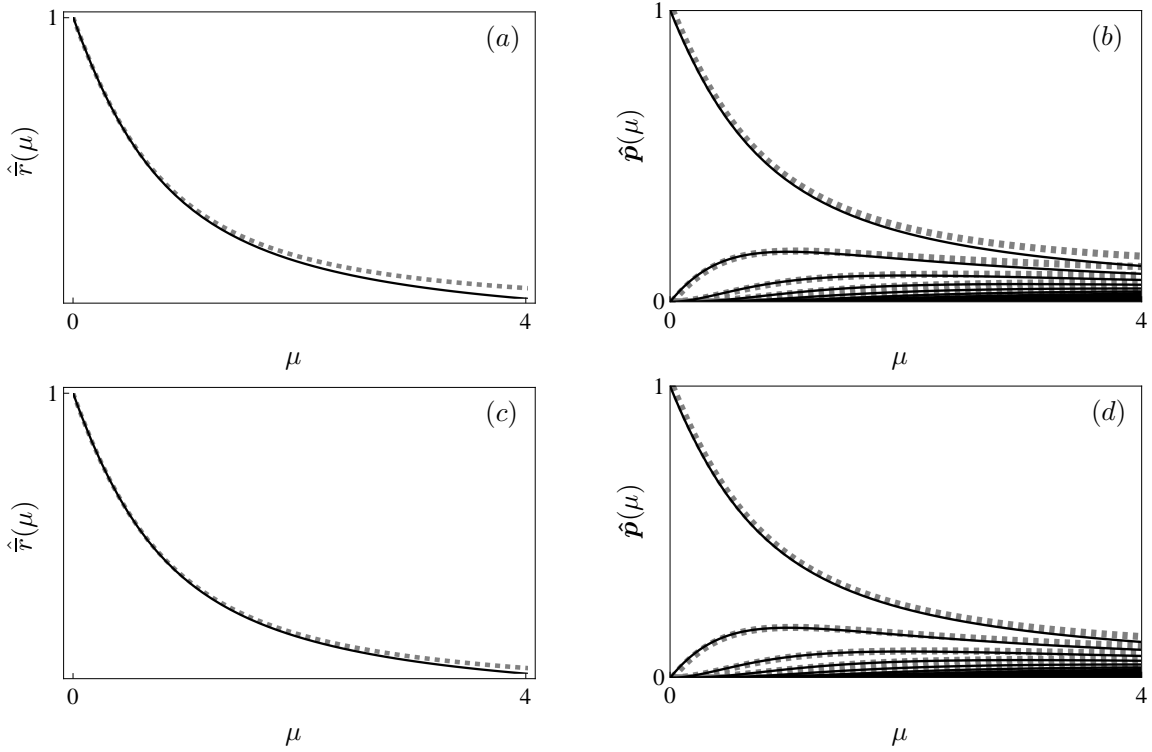}
\caption{Comparison of numerical calculations for the fitness landscape in Example \ref{ex:7} with the theoretical predictions. The black solid lines are the exact solutions for the case $N\to\infty$ and the grey dashed lines are numerical computations. $(a)$ and $(b)$ show the mean population fitness and the quasispecies distribution respectively for $N=100$ ; $(c)$ and $(d)$ show the same for $N=200$}\label{fig:6}
\end{figure}
\end{example}

\begin{example}[General formulas]\label{ex:8} As a final example, consider now a general fitness landscape, given by
\begin{equation}\label{eq3:4}
    \bs r_\infty=(r_0,r_1,\ldots),\quad r_0>r_i,\quad i=1,2,\ldots.
\end{equation}
We can prove
\begin{lemma}\label{lem:3:10} Suppose (dropping the subscript $\infty$ for notational convenience) that
$$
P(s)=\sum_{n=0}^\infty\hat{p}_ns^n
$$
gives a non-degenerate limit distribution
$$
\bs{\hat{p}}_\infty=(\hat{p}_0,\hat{p}_1,\ldots)
$$
such that $\hat{p}_0=P(0)>0$.

Then
\begin{equation}\label{eq3:5}
    \hat{\overline{r}}_\infty=r_0-\mu,\quad \hat p_n=\frac{\mu^n\hat{p}_0}{\prod_{j=1}^n(r_0-r_j)}\,,\quad n>0,\quad \hat{p}_0=\frac{1}{\displaystyle 1+\sum_{n=1}^\infty\frac{\mu^n}{\prod_{j=1}^n(r_0-r_j)}}\,.
\end{equation}
\end{lemma}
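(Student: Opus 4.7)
The plan is to work directly with the functional equation \eqref{eq2:5}, specialized to the landscape \eqref{eq3:4}, and extract the coefficients of $P(s)$ recursively. For the given landscape, $\bs r_\infty\circ P(s)=\sum_{n=0}^\infty r_n\hat{p}_ns^n$, so \eqref{eq2:5} becomes
\begin{equation*}
-\mu(1-s)\sum_{n=0}^\infty\hat p_ns^n+\sum_{n=0}^\infty r_n\hat p_ns^n=\hat{\overline{r}}_\infty\sum_{n=0}^\infty\hat p_ns^n.
\end{equation*}

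First I would set $s=0$. Since by hypothesis $\hat p_0>0$, the constant term of this identity collapses to $(r_0-\mu)\hat p_0=\hat{\overline{r}}_\infty\hat p_0$, yielding $\hat{\overline{r}}_\infty=r_0-\mu$. Note that the hypothesis $r_0>r_j$ for $j\geq 1$ makes this value consistent with a localized quasispecies whose mean fitness falls short of the peak by exactly one mutation load $\mu$, mirroring Examples \ref{ex:1}, \ref{ex:2}, \ref{ex:4}, \ref{ex:6}.

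Next I would equate coefficients of $s^n$ for each $n\geq 1$. The term $-\mu(1-s)P(s)$ contributes $-\mu\hat p_n+\mu\hat p_{n-1}$ to the coefficient of $s^n$, while $\bs r_\infty\circ P(s)$ contributes $r_n\hat p_n$ and the right-hand side contributes $(r_0-\mu)\hat p_n$. After the $-\mu\hat p_n$ terms cancel, one is left with the one-step recurrence
\begin{equation*}
\mu\hat p_{n-1}=(r_0-r_n)\hat p_n,
\end{equation*}
which, since $r_0-r_n>0$, can be solved to give $\hat p_n=\mu\hat p_{n-1}/(r_0-r_n)$. Iterating produces the claimed product formula $\hat p_n=\mu^n\hat p_0/\prod_{j=1}^n(r_0-r_j)$.

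Finally I would impose the normalization $P(1)=1$, which reads $\hat p_0\bigl(1+\sum_{n=1}^\infty\mu^n/\prod_{j=1}^n(r_0-r_j)\bigr)=1$, and solve for $\hat p_0$, giving the last formula in \eqref{eq3:5}. The only real subtlety, and the step I expect to require genuine care, is ensuring that this normalization is legitimate, i.e. that the series $\sum_{n\geq 1}\mu^n/\prod_{j=1}^n(r_0-r_j)$ converges. This is exactly the non-degeneracy condition assumed in the statement: whenever the series diverges, the only formal solution of \eqref{eq2:5}--\eqref{eq2:6} compatible with the recurrence is the trivial $P\equiv 0$, signaling the error-threshold phenomenon discussed at the end of Section \ref{sec:2}. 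For the landscapes of Examples \ref{ex:1} and \ref{ex:4} convergence holds iff $\mu<1$, matching the critical thresholds identified there, and in general the lemma reduces the existence of a localized quasispecies to convergence of this explicit series.
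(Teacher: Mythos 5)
Your proposal is correct and follows essentially the same route as the paper: extract $\hat{\overline{r}}_\infty=r_0-\mu$ from the constant term (using $\hat p_0>0$), obtain the one-step recurrence $\mu\hat p_{n-1}=(r_0-r_n)\hat p_n$ by comparing coefficients of $s^n$, iterate, and normalize via $P(1)=1$. The only cosmetic difference is that the paper first divides the functional equation by $s-1$ (using \eqref{eq2:6}) before comparing coefficients, which yields telescoping identities rather than your direct two-term recurrence; both lead to the same formulas, and your observation that convergence of the normalizing series is exactly the non-degeneracy/error-threshold condition is consistent with the paper's subsequent corollary.
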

Expressions \eqref{eq3:5} generalize considered above Examples \ref{ex:1}--\ref{ex:3} and \ref{ex:4}.
\begin{proof} We rewrite \eqref{eq2:5} as follows:
$$
\mu P(s)=\frac{\hat{\overline{r}}_\infty P(s)-\bs{r}_\infty\circ P(s)}{s-1}=\frac{\hat{\overline{r}}_\infty (P(s)-1)-(\bs{r}_\infty\circ P(s)-\hat{\overline{r}}_\infty)}{s-1}\,,
$$
or, taking into account \eqref{eq2:6},
$$
\mu \sum_{n=0}^\infty \hat{p}_ns^n=\hat{\overline{r}}_\infty\sum_{n=1}^\infty \hat{p}_n\frac{s^n-1}{s-1}-\sum_{n=1}^\infty r_n\hat{p}_n\frac{s^n-1}{s-1}\,,
$$
or
\begin{equation}\label{eq3:6}
\mu \sum_{n=0}^\infty \hat{p}_ns^n=\hat{\overline{r}}_\infty\sum_{n=1}^\infty \hat{p}_n(1+s+\ldots+s^{n-1})-\sum_{n=1}^\infty r_n\hat{p}_n(1+s+\ldots+s^{n-1})\,.
\end{equation}
Substituting $s=0$ yields
$$
\mu\hat{p}_0=\hat{\overline{r}}_\infty\sum_{n=1}^\infty \hat{p}_n-\sum_{n=1}^\infty r_n\hat{p}_n=\hat{\overline{r}}_\infty(1-\hat{p}_0)-(\hat{\overline{r}}_\infty-r_0\hat{p}_0)=(r_0-\hat{\overline{r}}_\infty)\hat{p}_0.
$$
By assumption $\hat{p}_0>0$, and we get the first equality in \eqref{eq3:5}.

Now consider the coefficient at $s$ in \eqref{eq3:6}. We have
$$
\mu \hat{p}_1=\hat{\overline{r}}_\infty(1-\hat{p}_0-\hat{p}_1)-(\hat{\overline{r}}_\infty-r_0\hat{p}_0-r_1\hat{p}_1)=(r_0-\hat{\overline{r}}_\infty)\hat{p}_0-(\hat{\overline{r}}_\infty-r_1)\hat{p}_1,
$$
or, using the first inequality in \eqref{eq3:5},
$$
\hat{p}_1=\frac{\mu \hat{p}_0}{r_0-r_1}\,.
$$
Proceeding by induction on $n$ and comparing the coefficients at $s^n$ in \eqref{eq3:6}, we prove the second equality in \eqref{eq3:5}. The last equality follows from the condition $\sum_{n=0}^\infty\hat{p}_n=1$.
\end{proof}
\begin{remark}The expressions for the limit distribution in Lemma \ref{lem:3:10} can be generalized to the case
$$
\bs r_\infty=(r_0,r_1,\ldots),\quad r_k\geq r_i,\quad i=0,1,\ldots,k-1,\quad r_k>r_i,\quad i=k+1,\,k+2,\ldots
$$
If one assumes additionally (as the numerical evidence suggests, see also Remark \ref{rm:3:4})  that
$$
\hat{p}_0=\ldots=\hat{p}_{k-1}=0,\quad \hat{p}_k>0,
$$
then
\begin{equation}\label{eq3:7}
    \hat{\overline{r}}_\infty=r_k-\mu,\quad \hat p_{k+n}=\frac{\mu^n\hat{p}_k}{\prod_{j=1}^n(r_k-r_{k+j})}\,,\quad n>0,\quad \hat{p}_k=\frac{1}{\displaystyle 1+\sum_{n=1}^\infty\frac{\mu^n}{\prod_{j=1}^n(r_k-r_{k+j})}}\,.
\end{equation}
\end{remark}

\begin{corollary}Assume that the conditions of Lemma \ref{lem:3:10} hold. If, additionally, the limit fitness landscape is such that $r_n=0$ for $n>n_0$ then the generating function $P(\infty)$ is rational and the limit distribution is asymptotically geometric.
\end{corollary}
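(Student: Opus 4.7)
The plan is to apply Lemma \ref{lem:3:10} directly and observe that the product in the denominator of $\hat{p}_n$ collapses into a pure geometric factor once the index exceeds $n_0$. Specifically, for $n > n_0$ the product $\prod_{j=1}^n (r_0 - r_j)$ factors as $\prod_{j=1}^{n_0}(r_0 - r_j) \cdot r_0^{\,n - n_0}$ because $r_j = 0$ for $j > n_0$. Writing $C = \hat{p}_0 / \prod_{j=1}^{n_0}(r_0 - r_j)$ and $\rho = \mu/r_0$, this yields $\hat{p}_n = C r_0^{n_0} \rho^n$ for all $n > n_0$, which is exactly the asymptotic geometric behavior claimed.

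Next, I would split the generating function into its head and its geometric tail,
\begin{equation*}
P_\infty(s) = \sum_{n=0}^{n_0} \hat{p}_n s^n + C r_0^{n_0} \sum_{n=n_0+1}^{\infty} (\rho s)^n,
\end{equation*}
and sum the tail as a standard geometric series, obtaining
\begin{equation*}
P_\infty(s) = \sum_{n=0}^{n_0} \hat{p}_n s^n + \frac{C r_0^{n_0} (\rho s)^{n_0+1}}{1 - \rho s}.
\end{equation*}
Both terms are rational in $s$, so $P_\infty(s)$ is a rational function, with the only (simple) pole at $s = 1/\rho = r_0/\mu > 1$.

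The only delicate point is the convergence of the tail, which requires $\rho < 1$, i.e.\ $\mu < r_0$. But this is forced by the hypothesis that Lemma \ref{lem:3:10} applies with a non-degenerate limit distribution: the normalization constant $\hat{p}_0^{-1} = 1 + \sum_{n\ge 1} \mu^n/\prod_{j=1}^n(r_0 - r_j)$ contains, for $n > n_0$, the geometric tail $\sum \rho^{n}$, and finiteness of this sum (and hence existence of the non-degenerate limit $\hat{p}_0 > 0$) is equivalent to $\mu < r_0$. Thus no additional hypothesis is needed, and I expect no real obstacle — the corollary is essentially a direct reading of formula \eqref{eq3:5} once the vanishing of $r_n$ beyond $n_0$ is substituted.
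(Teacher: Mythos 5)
Your argument is correct and is precisely the intended (and essentially the only) route: the paper states this corollary without proof as an immediate consequence of Lemma \ref{lem:3:10}, and your substitution of $r_j=0$ for $j>n_0$ into \eqref{eq3:5} to get $\hat p_n=Cr_0^{n_0}(\mu/r_0)^n$, followed by summing the geometric tail, fills that gap correctly. Your closing observation that convergence forces $\mu<r_0$ also matches the paper's subsequent corollary, since here $\mu^\ast=\liminf\sqrt[n]{\prod_{j=1}^n(r_0-r_j)}=r_0$.
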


\begin{corollary}Assume that the conditions of Lemma \ref{lem:3:10} hold. Then formulas \eqref{eq3:7} provide a solution for the non-degenerate limit distribution if and only if
\begin{equation}\label{eq3:8}
    \mu<\mu^\ast=\liminf \sqrt[n]{\prod_{j=1}^n(r_k-r_{k+j})}\leq r_k.
\end{equation}
\end{corollary}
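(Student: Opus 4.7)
The strategy is to interpret formulas \eqref{eq3:7} as defining a formal distribution and identify the condition under which this formal distribution is a bona fide probability measure with $\hat{p}_k>0$. Since the hypothesis $r_k>r_{k+j}$ ($j\ge 1$) makes every factor $r_k-r_{k+j}$ strictly positive, the formulas yield positive numbers whose normalizability is governed by the convergence of the single series
$$
S(\mu)=\sum_{n=1}^{\infty}\frac{\mu^n}{\prod_{j=1}^n(r_k-r_{k+j})}.
$$
When $S(\mu)<\infty$, \eqref{eq3:7} produces a valid non-degenerate probability distribution with $\hat{p}_k=1/(1+S(\mu))>0$; when $S(\mu)=\infty$, the last formula in \eqref{eq3:7} forces $\hat{p}_k=0$ and hence every $\hat{p}_{k+n}=0$, which is precisely the degenerate case. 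Thus the corollary reduces to characterizing the set of $\mu$ for which $S(\mu)$ converges.

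To this positive series I would apply Cauchy's root test. Setting $a_n=\mu^n/\prod_{j=1}^n(r_k-r_{k+j})$, a direct computation gives
$$
\limsup_{n\to\infty}\sqrt[n]{a_n}=\mu\cdot\limsup_{n\to\infty}\left(\prod_{j=1}^{n}(r_k-r_{k+j})\right)^{-1/n}=\frac{\mu}{\displaystyle\liminf_{n\to\infty}\sqrt[n]{\prod_{j=1}^{n}(r_k-r_{k+j})}}=\frac{\mu}{\mu^{\ast}}.
$$
The root test then yields convergence of $S(\mu)$ for $\mu<\mu^{\ast}$ and divergence for $\mu>\mu^{\ast}$, which is exactly the equivalence claimed in \eqref{eq3:8}. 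The critical value $\mu=\mu^{\ast}$ is, strictly speaking, the borderline case where the root test is inconclusive; the corollary's strict-inequality formulation is consistent with the interpretation of $\mu^{\ast}$ as the error-threshold value at which the phase transition occurs, in line with the discussion at the end of Section \ref{sec:2}.

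The bound $\mu^{\ast}\le r_k$ uses the tacit non-negativity $r_{k+j}\ge 0$, which holds in every example of the paper and can in any case be enforced by the shift invariance $\tilde m_i=m_i+\tilde m$ of \eqref{eq1:2} together with rescaling by $N$. Under this hypothesis $r_k-r_{k+j}\le r_k$ for every $j\ge 1$, so $\sqrt[n]{\prod_{j=1}^{n}(r_k-r_{k+j})}\le r_k$ for every $n$, and passing to the liminf gives $\mu^{\ast}\le r_k$. The main obstacle I foresee is the delicate treatment of the borderline case $\mu=\mu^{\ast}$ (where the root test is silent and the series can in principle converge or diverge depending on the finer behaviour of the $r_{k+j}$); this is the reason the corollary is stated with a strict inequality and makes no claim at the threshold itself. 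Otherwise the proof is a direct translation of the root-test convergence criterion through the explicit formulas \eqref{eq3:7}.
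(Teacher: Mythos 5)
Your proposal is correct and follows essentially the same route as the paper: both reduce the corollary to the convergence of the series $\sum_{n\ge 1}\mu^n/\prod_{j=1}^n(r_k-r_{k+j})$ and then invoke the Cauchy--Hadamard/root-test criterion, which identifies $\mu^\ast$ as the radius of convergence. Your added justification of the bound $\mu^\ast\le r_k$ (via $r_{k+j}\ge 0$) and your remark on the inconclusive borderline case $\mu=\mu^\ast$ are minor elaborations the paper leaves implicit, not a different argument.
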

\begin{proof}
Indeed, from \eqref{eq3:7}, the necessary and sufficient condition for the non-degenerate distribution to exist is the convergence of the series
$$
\sum_{n=1}^\infty\frac{\mu^n}{\prod_{j=1}^n(r_k-r_{k+j})}\,,
$$
which, by the Cauchy--Hadamard formula, converges if \eqref{eq3:8} holds and diverges if $\mu>\mu^\ast$.
\end{proof}
\begin{remark}
The critical value $\mu^\ast$ of the mutation rate in \eqref{eq3:8} should be considered the threshold value of the error threshold.
\end{remark}
To illustrate how these general expressions work, consider the fitness landscape
\begin{equation}\label{eq3:8a}
\bs r_\infty=(2,0,1,0,1,0,1,0,\ldots).
\end{equation}
Equation \eqref{eq3:8} predicts that for $\mu<\mu^\ast=\sqrt{2}$ we must have
$$
\hat{\overline{r}}_\infty=2-\mu,
$$
and, for several first coordinates of the equilibrium distribution
$$
\hat{p}_0=\frac{2-\mu^2}{2+\mu}\,,\quad \hat{p}_1=\frac{2-\mu^2}{2+\mu}\cdot\frac{\mu}{2}\,,\quad \hat{p}_2=\frac{2-\mu^2}{2+\mu}\cdot\frac{\mu^2}{2}\,,\quad \hat{p}_3=\frac{2-\mu^2}{2+\mu}\cdot\frac{\mu^3}{4}\,.
$$
Comparison of these exact theoretical predictions with numerical computations are given in Fig.~\ref{fig:7}.
\begin{figure}[!th]
\centering
\includegraphics[width=0.95\textwidth]{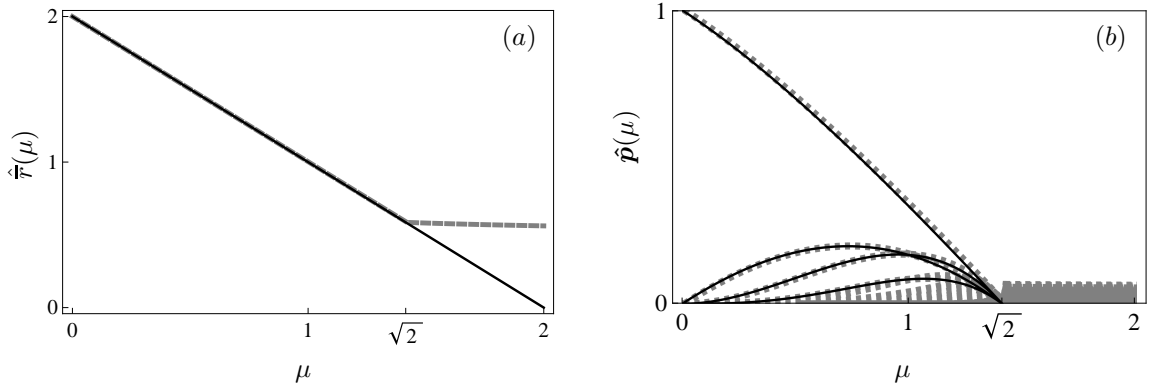}
\caption{Comparison of numerical calculations for the fitness landscape \eqref{eq3:8a} in Example \ref{ex:8} with the theoretical predictions. The black solid lines are the exact solutions for the case $N\to\infty$ and the grey dashed lines are numerical computations for $N=200$. $(a)$ Mean population fitness.  $(b)$ The quasispecies distribution}\label{fig:7}
\end{figure}
\end{example}
\section{Concluding remarks}
We presented an analytical approach to calculate the mutation--selection equilibrium in the Crow--Kimura evolutionary model, which is based on the reformulation of the original eigenvalue problem as a nonlinear functional--differential equation for the unknown probability generating function and on taking a formal limit $N\to\infty$ for the sequence length. This approach provides closed analytical solutions for at least several special fitness landscapes, as we amply illustrated in the previous section. We remark that, to the best of our knowledge, in the existing literature only for two fitness landscapes these equilibrium distributions were written down explicitly, and most attention was concentrated on finding analytical expressions for the mean population fitness (the leading eigenvalue) and for some other population averages. With the advent of sequencing technique, as everyone witnessed for the last two decades, it is now completely feasible to sequence the whole population of quasispecies, and therefore our formulas can be used to further relate theory and experiment in the evolutionary questions. 

While the approach suggested in Section 2 clearly works for all the examples we considered, the conditions are quite difficult to rigorously check and their proof constitutes an independent and deep problem. Our experience tells us that it is quite unlikely to present general (necessary and/or sufficient) conditions, which are easy to check, that would guarantee that our formal limit yields the correct result for an arbitrary fitness landscape. As of now each special case has to be tackled on its own, as we demonstrated for the single peaked landscape (Example~\ref{ex:1}) in \cite{bratus2013linear}. Probably a first realistic step in the search of the general conditions is to prove that the formulas in Example \ref{ex:8} follow rigorously from the assumption on the unique fixed maximum of the fitness landscape. We conclude our text with this open problem.

\appendix
\section{Parametric solutions to the basic eigenvalue problem}\label{app}
The suggested general approach of the generating functions is heuristic. Rigorous proofs can be obtained using the parametric solution method introduced in \cite{bratus2013linear}. In this appendix we give a concise form for the method used in \cite{bratus2013linear} (see Appendix \ref{app:1}) and also introduce a new parametric solution in Appendix \ref{app:2}, which is used to prove the result from Example \ref{ex:7} on the limit form of $\hat{\overline{r}}_\infty$.

Recall that we are interested in finding the dominant eigenvalue and the corresponding positive eigenvector of the problem
$$
(\bs M+\mu \bs Q)\bs{\hat{p}}=\hat{\overline{m}}\, \bs{\hat{p}},
$$
where $\mu\geq 0$, $\bs M=\diag(m_0,\ldots,m_N)$, and the matrix $\bs Q$ has the form
$$
\bs Q=\begin{bmatrix}
               -N & 1 & 0 & 0 & \ldots & \ldots & 0 \\
               N & -N & 2 & 0 & \ldots & \ldots & 0 \\
               0 & N-1 & -N & 3 & \ldots & \ldots & 0 \\
               0 & 0 & N-2 & -N & \ldots & \ldots & 0 \\
               \ldots & \ldots & \ldots & \ldots & \ldots & \ldots & 0 \\
               0 & 0 & \ldots & \ldots& 2 & -N & N \\
               0 & 0 & \ldots & \ldots & 0& 1 & -N \\
             \end{bmatrix}.
$$
We introduce the notations
$$
\bs S=\frac 1N\,\bs Q,\quad \bs R=\frac 1N\,\bs M,\quad \hat{\overline{r}}=\frac 1N\,\hat{\overline{m}}\,.
$$
Then the original eigenvalue problem takes the form
$$
(\bs R+\mu \bs S)\bs{\hat{p}}=\hat{\overline{r}}\, \bs{\hat{p}},
$$
or, after introducing a new parameter $u=\mu/\hat{\overline{r}}$,
\begin{equation}\label{eqA:2}
    \frac{1}{\hat{\overline{r}}}\,\bs R\bs{\hat{p}}+u\bs S\bs{\hat{p}}=\bs{\hat{p}}.
\end{equation}
\subsection{Approach A}\label{app:1} In \cite{bratus2013linear} it was shown that
$$
\bs C^{-1}\bs Q\bs C=-2\diag(0,1,2,\ldots,N),
$$
where $\bs C=(c_{ka})$ is the matrix composed (by columns) of the coefficients of the generating polynomials
$$
P_a(s)=\sum_{k=0}^N c_{ka}s^k=(1-s)^a(1+s)^{N-a},
$$
and possesses the property $\bs C^2=2^N\bs I$, where $\bs I$ is the identity matrix. Using this information, we have
$$
\bs C^{-1}\bs S\bs C=-2\diag\left(0,\frac 1N,\frac 2N,\ldots,1\right).
$$
Equation \eqref{eqA:2} can be written as
$$
\hat{\overline{r}}\,\bs{\hat{p}}=(\bs I-u\bs S)^{-1}\bs R\bs{\hat{p}}.
$$
Using
$$
(\bs I-u\bs S)^{-1}=\bs C^{-1}(\bs I-u\bs C^{-1}\bs S\bs C)^{-1}\bs C=\bs C^{-1}\diag\left(1,\frac{1}{1+\frac{2u}{N}},\frac{1}{1+\frac{4u}{N}},\ldots,\frac{1}{1+\frac{2u N}{N}}\right)\bs C,
$$
we find
$$
(\bs I-u\bs S)^{-1}=\bs F(u)=\bigl(F_{ab}(u)\bigr),\quad F_{ab}=\frac{1}{2^N}\sum_{k=0}^N\frac{c_{ak}c_{kb}}{1+\frac{2ku}{N}}\,.
$$
Therefore, $\hat{\overline{r}}$ is the dominant eigenvalue of the matrix $\bs F(u)\bs R$, and $\bs{\hat{p}}$ is the corresponding eigenvector, both of which can be represented in the parametric form, depending on parameter $u$. The details how to write down the explicit formulas depending on the number of nonzero elements of the vector $\bs m=(m_0,\ldots,m_N)$ are given in \cite{bratus2013linear}.

\subsection{Approach B}\label{app:2}
The equality
$$
\hat{\overline{r}}\,\bs{\hat{p}}=(\bs I-u\bs S)^{-1}\bs R\bs{\hat{p}}
$$
can be written as
\begin{equation}\label{eqA:3}
\hat{\overline{r}}\,\bs{\hat{p}}=\bigl((1+u)\bs I-u(\bs S+\bs I)\bigr)^{-1}\bs R \,\bs{\hat{p}}.
\end{equation}
Let
$$
\bs B=\bs S+\bs I=\begin{bmatrix}
               0 & 1/N & 0 & 0 & \ldots & \ldots & 0 \\
               1 & 0 & 2/N & 0 & \ldots & \ldots & 0 \\
               0 & 1-1/N & 0 & 3/N & \ldots & \ldots & 0 \\
               0 & 0 & 1-2/N & 0 & \ldots & \ldots & 0 \\
               \ldots & \ldots & \ldots & \ldots & \ldots & \ldots & \ldots \\
               0 & 0 & \ldots & \ldots& 2/N & 0 & 1 \\
               0 & 0 & \ldots & \ldots & 0& 1/N & 0 \\
             \end{bmatrix},
$$
i.e., the matrix $\bs B$ is two diagonal, stochastic, and such that
$$
\bs C^{-1}\bs B\bs C=\diag \left(1,1-\frac 2N,1-\frac 4N,\ldots,-1\right).
$$
Direct observations yield that all natural powers of $\bs B$ also will be stochastic. For the even powers $\bs B^{2k}=\bigl(b_{ij}^{(2k)}\bigr)$ we have $b_{ij}^{(2k)}=0$ if $i+j$ is odd, and for the odd powers  $\bs B^{2k-1}=\bigl(b_{ij}^{(2k-1)}\bigr)$ we have $b_{ij}^{(2k-1)}=0$ if $i+j$ is even. Moreover, the properties of $\bs B$ and $\bs C$ imply
\begin{equation*}
\begin{split}
\lim_{k\to\infty}\bs B^{2k}&=\bs C\diag (1,0,\ldots,0,1)\bs C^{-1}=\frac{1}{2^N}\left((1+(-1)^{i+j})\binom{N}{i}\right),\\
\lim_{k\to\infty}\bs B^{2k-1}&=\bs C\diag (1,0,\ldots,0,-1)\bs C^{-1}=\frac{1}{2^N}\left((1-(-1)^{i+j})\binom{N}{i}\right).
\end{split}
\end{equation*}
Using the introduced notations equality \eqref{eqA:3} can be written as
\begin{equation}\label{eqA:4}
    \hat{\overline{r}}\,\bs{\hat{p}}=(1+u)^{-1}\left(\bs I-\frac{u}{1+u}\bs B\right)^{-1}\bs R\,\bs{\hat{p}}=\sum_{k=0}^\infty\frac{u^k}{(1+u)^{k+1}}\bs B^k\bs R\,\bs{\hat{p}},
\end{equation}
where the application of the geometric series can be justified by noting that the 1-norm of the matrix $\frac{u}{1+u}\bs B$ is less than 1 for any $u>0$.

To illustrate how the parametric solution \eqref{eqA:4} works consider the case of the fitness landscape in Example \ref{ex:7}. Let $N=2A$ be even,
$$
\bs r=(0,\ldots,0,1,0,\ldots,0),
$$
where one is at the $A$-th position. In this case
$$
\hat{\overline{r}}=\hat{p}_A,
$$
and \eqref{eqA:4} takes the form
$$
 \hat{\overline{r}}\,\hat p_A=\sum_{k=0}^{\infty}\frac{u^k}{(1+u)^{k+1}}b^{(k)}_{A,A}\,\hat p_A.
$$
Since $b^{(k)}_{A,A}=0$ for odd $k$ and $\hat{\overline{r}}=\hat p_A>0$ then
$$
 \hat{\overline{r}}=\sum_{k=0}^{\infty}\frac{u^{2k}}{(1+u)^{2k+1}}b^{(2k)}_{A,A}.
$$
From the properties of $\bs B$ and $\bs C$ we have
$$
b^{(2k)}_{A,A}=\frac{1}{2^{2A}}\sum_{i=0}^{2A}c_{Ai}c_{iA}\left(1-\frac iA\right)^{2k}.
$$
Using the property that $c_{ij}\binom{N}{j}=c_{ji}\binom{N}{i}$ (see \cite{bratus2013linear} for an easy proof) and the fact that $c_{iA}$ are the coefficients of the polynomial
$$
P_A(s)=\sum_{l=0}^A(-1)^l\binom{A}{l}s^{2l},
$$
we find
$$
c_{Ai}c_{iA}=\begin{cases}0,&i=2l+1,\\
\binom{2l}{l}\binom{2(A-l)}{A-l},&i=2l,
\end{cases}
$$
and therefore
$$
b^{(2k)}_{A,A}=\frac{1}{2^{2A}}\sum_{l=0}^{A}\binom{2l}{l}\binom{2(A-l)}{A-l}\left(1-\frac{2l}{A}\right)^{2k}\,.
$$
Using the approximation
$$
\frac{1}{2^{2n}}\binom{2n}{n}\approx \frac{1}{\sqrt{\pi n}}\,,
$$
we obtain
$$
b^{(2k)}_{A,A}\approx \frac{2}{2^{2A}}\binom{2A}{A}+\frac{1}{\pi}\sum_{l=1}^{A-1}\frac{\left(1-\frac{2l}{A}\right)^{2k}}{\sqrt{l(A-l)}}\,,
$$
or, after taking the limit $A\to\infty$,
$$
\lim_{A\to\infty}b^{(2k)}_{A,A}=\frac{1}{\pi}\int_{0}^1\frac{(1-2x)^{2k}\D x}{\sqrt{x(1-x)}}=\frac{2}{\pi}\int_0^{\pi/2}\cos^{2k}z\D z=\frac{1}{2^{2k}}\binom{2k}{k}.
$$
Therefore, in the limit of the infinite sequence length
$$
\hat{\overline{r}}_\infty=\sum_{k=0}^{\infty}\frac{u^{2k}}{(1+u)^{2k+1}}\frac{1}{2^{2k}}\binom{2k}{k}=\frac{1}{1+u}\frac{1}{\sqrt{1-\frac{u^2}{(1+u)^2}}}=\frac{1}{\sqrt{2u+1}}\,,
$$
where we used the fact that
$$
\frac{1}{\sqrt{1-x^2}}=\sum_{k=0}^\infty \frac{1}{2^{2k}}\binom{2k}{k}x^{2k},\quad |x|<1.
$$
Finally, remembering that $u=\frac{\mu}{\hat{\overline{r}}_\infty}$ gives
$$
\hat{\overline{r}}_\infty=\frac{1}{\sqrt{\frac{2\mu}{\hat{\overline{r}}_\infty}+1}}\,,
$$
from where
$$
\hat{\overline{r}}_\infty=\sqrt{1+\mu^2}-\mu,
$$
as it was stated in Example \ref{ex:7} and proved in \cite{bratus2013linear} using, essentially, approach from Appendix~\ref{app:1}.

\paragraph{Acknowledgements:} 
ASN's research is supported in part by ND EPSCoR and NSF grant \#EPS-0814442.


\begin{thebibliography}{10}

\bibitem{baake1999}
E.~Baake and W.~Gabriel.
\newblock {Biological evolution through mutation, selection, and drift: An
  introductory review}.
\newblock In D.~Stauffer, editor, {\em {Annual Reviews of Computational Physics
  VII}}, pages 203--264. World Scientific, 1999.

\bibitem{Baake2007}
E.~Baake and H.-O. Georgii.
\newblock Mutation, selection, and ancestry in branching models: a variational
  approach.
\newblock {\em Journal of Mathematical Biology}, 54(2):257--303, Feb 2007.

\bibitem{baake2001mutation}
E.~Baake and H.~Wagner.
\newblock Mutation--selection models solved exactly with methods of statistical
  mechanics.
\newblock {\em Genetical research}, 78(1):93--117, 2001.

\bibitem{bailey1990elements}
N.~T.~J. Bailey.
\newblock {\em The elements of stochastic processes with applications to the
  natural sciences}, volume~25.
\newblock John Wiley \& Sons, 1990.

\bibitem{biebricher2005error}
C.~K. Biebricher and M.~Eigen.
\newblock The error threshold.
\newblock {\em Virus research}, 107(2):117--127, 2005.

\bibitem{bratus2013linear}
A.~S. Bratus, A.~S. Novozhilov, and Y.~S. Semenov.
\newblock {Linear algebra of the permutation invariant Crow--Kimura model of
  prebiotic evolution}.
\newblock {\em Mathematical Biosciences}, 256:42--57, 2014.

\bibitem{burger2000mathematical}
R.~B{\"{u}}rger.
\newblock {\em {The mathematical theory of selection, mutation, and
  recombination}}.
\newblock Wiley, 2000.

\bibitem{crow1970introduction}
J.~F. Crow and M.~Kimura.
\newblock {\em An introduction to population genetics theory.}
\newblock New York, Evanston and London: Harper \& Row, Publishers, 1970.

\bibitem{eigen1971sma}
M.~Eigen.
\newblock Selforganization of matter and the evolution of biological
  macromolecules.
\newblock {\em Naturwissenschaften}, 58(10):465--523, 1971.

\bibitem{eigen1988mqs}
M.~Eigen, J.~McCaskill, and P.~Schuster.
\newblock Molecular quasi-species.
\newblock {\em Journal of Physical Chemistry}, 92(24):6881--6891, 1988.

\bibitem{Eigen1977}
M.~Eigen and P.~Schuster.
\newblock {The hypercycle. A principle of natural self-organization. Part A:
  Emergence of the hypercycle.}
\newblock {\em Naturwissenschaften}, 64(11):541--565, Nov 1977.

\bibitem{galluccio1997exact}
S.~Galluccio.
\newblock Exact solution of the quasispecies model in a sharply peaked fitness
  landscape.
\newblock {\em Physical Review E}, 56(4):4526, 1997.

\bibitem{Hermisson2002}
J.~Hermisson, O.~Redner, H.~Wagner, and E.~Baake.
\newblock Mutation-selection balance: ancestry, load, and maximum principle.
\newblock {\em Theoretical Population Biology}, 62(1):9--46, Aug 2002.

\bibitem{higgs1994error}
P.~G. Higgs.
\newblock Error thresholds and stationary mutant distributions in multi-locus
  diploid genetics models.
\newblock {\em Genetical Research}, 63(01):63--78, 1994.

\bibitem{jainkrug2007}
K.~Jain and J.~Krug.
\newblock {Adaptation in Simple and Complex Fitness Landscapes}.
\newblock In U.~Bastolla, M.~Porto, H.~Eduardo~Roman, and M.~Vendruscolo,
  editors, {\em Structural approaches to sequence evolution}, chapter~14, pages
  299--339. Springer, 2007.

\bibitem{rumschitzki1987spectral}
D.~S. Rumschitzki.
\newblock {Spectral properties of Eigen evolution matrices}.
\newblock {\em Journal of Mathematical Biology}, 24(6):667--680, 1987.

\bibitem{saakian2007new}
D.~B. Saakian.
\newblock {A new method for the solution of models of biological evolution:
  Derivation of exact steady-state distributions}.
\newblock {\em Journal of Statistical Physics}, 128(3):781--798, 2007.

\bibitem{saakian2004solvable}
D.~B. Saakian, C.-K. Hu, and H.~Khachatryan.
\newblock Solvable biological evolution models with general fitness functions
  and multiple mutations in parallel mutation-selection scheme.
\newblock {\em Physical Review E}, 70(4):041908, 2004.

\bibitem{semenov2014}
Y.~S. Semenov, A.~S. Bratus, and A.~S. Novozhilov.
\newblock {On the behavior of the leading eigenvalue of the Eigen evolutionary
  matrices}.
\newblock {\em Mathematical Biosciences}, 258:134--147, 2014.

\bibitem{swetina1982self}
J.~Swetina and P.~Schuster.
\newblock Self-replication with errors: A model for polvnucleotide replication.
\newblock {\em Biophysical Chemistry}, 16(4):329--345, 1982.

\bibitem{thompson1974eigen}
C.~J. Thompson and J.~L. McBride.
\newblock {On Eigen's theory of the self-organization of matter and the
  evolution of biological macromolecules}.
\newblock {\em Mathematical Biosciences}, 21(1):127--142, 1974.

\bibitem{wagner1993difference}
G.~P. Wagner and P.~Krall.
\newblock {What is the difference between models of error thresholds and
  Muller's ratchet?}
\newblock {\em Journal of Mathematical Biology}, 32(1):33--44, 1993.

\bibitem{wiehe1997model}
T.~Wiehe.
\newblock Model dependency of error thresholds: the role of fitness functions
  and contrasts between the finite and infinite sites models.
\newblock {\em Genetical research}, 69(02):127--136, 1997.

\bibitem{wilke2005quasispecies}
C.~O. Wilke.
\newblock Quasispecies theory in the context of population genetics.
\newblock {\em BMC Evolutionary Biology}, 5(1):44, 2005.

\end{thebibliography}
\end{document}